\newtheorem{Thm}{Theorem}[section]
\newtheorem{theorem}[Thm]{Theorem}
\newtheorem{proposition}[Thm]{Proposition}
\newtheorem{corollary}[Thm]{Corollary}
\newtheorem{remark}{Remark}[section]
\newtheorem{definition}[Thm]{Definition}
\title{A criterion to characterize interacting theories in the Wightman framework}
\author{Christian D.\ J\"akel\footnote{jaekel@ime.usp.br, Dept.~de Matem\'atica Aplicada, 
Univ. de S\~ao Paulo (USP), Brasil} \ 
and Walter F. Wreszinski\footnote{wreszins@gmail.com, 
Instituto de Fisica, Universidade de S\~ao Paulo (USP), Brasil}}        
\begin{document}

\maketitle

\begin{abstract}
We propose a criterion to characterize interacting theories in a suitable Wightman framework
of relativistic quantum field theories which 
incorporates a ``singularity hypothesis'', which has been conjectured 
for a long time, is supported by renormalization group theory,
but has never been formulated mathematically. The (nonperturbative) wave function
renormalization $Z$ occurring in these theories is shown not to be necessarily equal
to zero, except if the equal time commutation relations (ETCR) are assumed. Since the ETCR
are not justified in general (because the interacting fields cannot in general be restricted to sharp times,
as is known from model studies), the condition $Z=0$ is not of general validity in interacting 
theories. We conjecture that it characterizes either unstable (composite) particles or the   
charge-carrying particles, which become infraparticles in the presence of massless particles.
In the case of QED, such ``dressed'' electrons are not expected to be confined, but
in QCD we propose a quark confinement criterion, which follows naturally from lines suggested by 
the works of Casher, Kogut and Susskind and Lowenstein and Swieca.
 
\end{abstract}

\section{Introduction}

In his recent recollections, 'tHooft (\cite{tHooft}, Sect.~5) emphasizes that 
an asymptotic (divergent) series, such as the power series for the scattering (S) 
matrix in the coupling constant $\alpha = \frac{1}{137}$ in quantum 
electrodynamics ($qed_{1+3}$) does not define a theory rigorously (or mathematically). 
In the same token, Feynman was worried about whether the qed S matrix would be ``unitary 
at order 137'' (\cite{Wight2}, discussion  on  p.~126), and in Section 5 of \cite{tHooft}, 'tHooft 
remarks that the uncertainties in the S matrix amplitudes at order~137 for qed are comparable 
to those associated to the ``Landau ghost'' (or pole) \cite{Landau}.  

The main reason to believe in quantum field theory  is, therefore, 
in spite of the spectacular success of perturbation theory (\cite{Weinb1}, 
Chap.~11 and \cite{Weinb3}
Chap.~15), strongly tied to non-perturbative approaches. 

In his famous Erice lectures of 1979 \cite{Wight}, ''Should we believe in quantum field theory?'',
Wightman remarks (p.~1011) that he
once expressed to Landau his lack of confidence in the arguments 
which he and his co-workers had
put forward for the inconsistency of field theory. 
``He then offered me the following: You agree
that the essential problem of quantum field theory is its high-energy 
behavior? Yes. You agree that
up to now no-one has suggested a consistent high-energy behavior 
for quantum field theory? Yes. 
Then you have to believe in the inconsistency of quantum field theory, 
because physicists are smart 
and if there was a consistent high-energy behavior, they would have found it!''

Landau's remarks concern, in particular, quantum 
electrodynamics in the Coulomb gauge in three space dimensions - $qed_{1+3}$. 
He refers, as we do
throughout the paper, to ``bona-fide'' field theories in which cutoffs have 
been eliminated,
and are thus invariant under certain symmetry groups: ``effective'' field 
theories are thereby excluded. 
Our recent result \cite{JaWre2}, however, establishes positivity of the (renormalized) 
energy, uniformly in the volume ($V$) and ultraviolet ($\Lambda$) cutoffs. This stability 
result may be an indication of the absence of Landau poles or ghosts in $qed_{1+3}$ . 
In order to achieve this, the theory in Fock space (for 
fixed values of the cutoffs)  is exchanged for a formulation in which 
(in the words of Lieb and Loss, who were the first to exhibit this phenomenon in a 
relativistic model \cite{LLrel}), ``the electron Hilbert space is linked 
to the photon Hilbert space in an inextricable way''. Thereby, ``dressed photons'' 
and ``dressed electrons'' arise as new entities. 

The above picture, required by stability, provides a physical characterization of the
otherwise only mathematically motivated non-Fock representations which arise in an
interacting theory (\cite{Wight},\cite{Wight1}). The non-unitary character of the transformations 
to the physical Hilbert space,
when the space and ultraviolet cutoffs are removed, are intrinsic to the singular nature of field 
theory, and never really belonged to the usual lore of (even a smart!) theoretical physicist
(for an exception, see the book by G. Barton \cite{Barton1}), which might explain Landau's remarks,
as well as the fact that the answer to the question posed in the title involves considerations
both of foundational and mathematical nature.

In this paper,  we attempt to 
incorporate the findings of \cite{JaWre2} in a proper general 
framework. For this purpose, we suggest a criterion characterizing interacting theories, 
incorporating a ``singularity hypothesis'' which has been conjectured for a long time, 
is supported by renormalization group theory,
but has never been formulated mathematically. We do so 
in Definition~3.3. 
With this hypothesis, it is 
possible to prove Theorem~\ref{thm:3}, which is our main result. It permits 
to characterize
interacting quantum field theories, in a suitable Wightman framework, by the
property that the total spectral measure is infinite.
The (non-perturbative) wave-function renormalization occurring in these theories,
defined in Proposition~\ref{prop:1}, is shown not to be universally equal to
zero in interacting theories satisfying the singularity hypothesis,
except if equal time commutation relations (ETCR) are assumed (Corollaries ~\ref{cor:2}
and ~\ref{cor:3}).
Since the ETCR are known, from model studies, not to be universally valid,
the physical interpretation of the condition $Z=0$ is open to question.
We suggest that it may characterize either the existence of ``dressed'' particles in charged 
sectors, which may occur for specific theories 
such as $qed_{1+3}$) due to the presence of massless particles (the photon), by a theorem of 
Buchholz \cite{Buch}, or unstable (composite)
particles. The latter conjecture, due to Weinberg (\cite{Weinb1}, p. 460), is, however, 
still unsupported by any rigorous result.

In the case of $qed_{1+3}$ , such ``dressed'' electrons are not expected to be confined, but
in qcd we propose a quark confinement criterion, which follows naturally from lines suggested by 
by the works of Casher, Kogut and Susskind \cite{CaKoSu}, and 
Lowenstein and Swieca \cite{LSwi} 
in massless $qed_{1+1}$ (the Schwinger model),
in which the ``dressing'' of the electrons by photons is rather drastic, 
the electron field being 
expressed entirely as a functional of the photon field, which 
becomes massive.
 
\section{The Field Algebra}

In order to formulate the above-mentioned phenomena in $qed_{1+3}$, we consider a 
\emph{field algebra} ${\cal F}^{0} \equiv \{A_{\mu}^{0},\psi^{0},\bar{\psi^{0}}\}$ (containing 
the identity) generated by
the free vector potential $A_{\mu}^{0}, \mu=0,1,2,3$, and the 
electron-positron fields~$\psi^{0},\bar{\psi^{0}}$. 
We may assume that the field algebra is initially defined 
on the Fock-Krein (in general, indefinite-metric, see \cite{Bognar}) tensor product of 
photon and fermion Fock spaces. However, 
of primary concern for us will be an \emph{inequivalent} 
representation of the field algebra  ${\cal F}^{0}$ on a (physical) 
Hilbert space ${\cal H}$, with generator of time-translations - the physical 
Hamiltonian $H$ - satisfying \emph{positivity}, \emph{i.e.}, 
	\begin{equation}
		H \ge 0 \; ,
		\label{(1)}
	\end{equation}
and such that
	\begin{equation}
		H \Omega = 0  \; ,
		\label{(2)}
	\end{equation}
where $\Omega \in {\cal H}$ is the vacuum vector, and
	\begin{align}
		A_{\mu} &\equiv A_{\mu}(A_{\mu}^{0},\psi^{0}, \bar{\psi^{0}})  \; ,
		\label{(3)}
		\\
		\Psi & \equiv \Psi(A_{\mu}^{0},\psi^{0}, \bar{\psi^{0}})  \; ,
		\label{(4.1)}
		\\
		\bar{\Psi} & \equiv \bar{\Psi}(A_{\mu}^{0},\psi^{0}, \bar{\psi^{0}})  \; .
		\label{(4.2)}
	\end{align}  

We may define, as usual, the (restricted) class of c-number $U(1)$ local 
gauge transformations, acting on ${\cal F}^{0}$, by the maps
	\begin{align}
		A_{\mu}^{0}(f) & \to A_{\mu}^{0}(f) +c \, \langle f,\partial^{\mu}u \rangle  \; ,
		\label{(5.1)}
		\\
		\psi^{0}(f) & \to \psi^{0} \bigl( {\rm e}^{iu} f \bigr)  \; ,
		\label{(5.2)}
		\\
		\bar{\psi^{0}}(f) & \to \bar{\psi^{0}} \bigl( {\rm e}^{-iu} f \bigr)  \; ,
		\qquad f \in {\cal S}(\mathbb{R}^{1+s}) \; . 
		\label{(5.3)}
	\end{align}
In qed, $c=\frac{1}{e}$ and $u$ satisfies certain regularity conditions, which guarantee that 
	\[
		{\rm e}^{\pm iu}f \in {\cal S}(\mathbb{R}^{1+s}) \quad \text{if} \; f \in {\cal S}(\mathbb{R}^{1+s}) \; , 
		\quad \text{and} \quad  | \langle f,\partial^{\mu}u \rangle | < \infty \; . 
	\]
In the following, ${\cal S}$ denotes  Schwartz space (see, \emph{e.g.}, \cite{BB}), 
$\langle \, . \, , \, . \, \rangle$  denotes  the $L^{2}(\mathbb{R}^{1+s})$ scalar product 
and $s$ is the space dimension. Such transformations have been considered in a quantum context, that of (massless) relativistic qed in 
two space-time dimensions (the Schwinger model) by Raina and Wanders,
but their unitary implementability is a delicate matter \cite{RaWa}. 
We shall use (6)--(8) merely as
as a guiding principle to construct the observable algebra, to which 
we now turn. 

The \emph{observable algebra} is assumed to consist of 
gauge-invariant objects, namely the tensor fields
	\begin{equation}
		F_{\mu,\nu} = \partial_{\mu} A_{\nu}-\partial_{\nu}A_{\mu}  \; ,
		\label{(6)}
	\end{equation}
with $A_{\mu}$ given by (3), describing the dressed photons, and the quantities (10)
below. In order to define them, we assume the existence of gauge-invariant
quantities $\Psi,\bar{\Psi}$ in \eqref{(4.1)}, \eqref{(4.2)}, which create-destroy 
electrons-positrons ``with their photon clouds''. 

While we hope that the results in \cite{JaWre2} will eventually lead to an (implicit or explicit) 
expression for $\Psi,\bar{\Psi}$, it should be emphasized that this is a very difficult, 
open problem; see the important work of Steinmann in perturbation theory \cite{Steinmann1}. 
We also note that when we  consider the \emph{vacuum sector},
the fermion part of the observable algebra will be assumed to consist of the combinations
	\begin{align}
		A(f,g) & \equiv \bar{\Psi}(f)\Psi(g) \mbox{ with } f,g \in {\cal S}(\mathbb{R}^{1+s})  \; ,
		\nonumber
		\\
		B(f,g) & \equiv \Psi(f)\bar{\Psi}(g) \mbox{ with } f,g \in {\cal S}(\mathbb{R}^{1+s})  \; ,
		\label{(7)}
	\end{align}
The quantities $A$ and $B$ above stand for rigorous versions of  quantities of the form \eqref{(8a)}, smeared with 
functions $f,g$, which have never been constructed except in the Schwinger model. The existence of \emph{charged sectors} is a related 
problem, which will concern us in Section 4.  

\section{A framework for relativistic quantum gauge theories}

In this paper we propose a framework which is not new, having been used by Lowenstein and 
Swieca \cite{LSwi} and Raina and Wanders \cite{RaWa} to construct a theory of $qed_{1+1}$, the
Schwinger model.

The theory will be defined by its \emph{n-point Wightman functions} \cite{StreWight} of 
observable fields. Alternatively, a Haag-Kastler theory~\cite{HK} 
may be envisaged. It has been shown in the seminal work of the latter authors that 
the whole content of a theory can be expressed in terms of its observable algebra. 
In the case of gauge theories, the latter corresponds to the algebra generated by gauge 
invariant quantities \cite{StrWight}. As remarked by Lowenstein and Swieca \cite{LSwi}, 
the observable algebra, being gauge-invariant, should have the same representations, 
independently of the gauge of the field algebra it is constructed from. Thus, $n$-point 
functions constructed over the observable algebra should be the same in all gauges. 
These remarks fully justify the  usage  
of non-covariant gauges, which, as we shall see, are of particular importance in a 
non-perturbative framework. For scattering theory 
and particle concepts within a theory of local observables, see \cite{ArHa}, \cite{BuPoSt}, 
and \cite{BuchSum} for a lucid review.

There exist various arguments supporting the use of non-covariant gauges in 
relativistic quantum field theory: they are of both physical and mathematical nature. 
In part one of his treatise, Weinberg notes (\cite{Weinb1}, p.~375, Ref.~2): ``the use of 
Coulomb gauge in electrodynamics was strongly advocated by Schwinger  
on pretty much the same grounds as here: that we ought not to introduce photons with 
helicities other than $\pm 1$''. Indeed, as shown by Strocchi \cite{Strocchi}, a framework 
excluding ``ghosts'' necessarily requires the use of non-manifestly covariant gauges, 
such as the Coulomb gauge in $qed_{1+3}$, the Weinberg or unitary gauge in the Abelian 
Higgs model \cite{Weinb2},  and the Dirac \cite{Dirac} or light-cone gauge in quantum 
chromodynamics (\cite{SriBro}, \cite{Cornwall}). Another instance of the physical-mathematical advantage 
of a non-covariant gauge is the ``$\alpha=\sqrt{\pi}$'' gauge in massless $qed_{1+1}$ - the 
Schwinger model \cite{Schw} - see \cite{LSwi}, \cite{RaWa}. As in the Coulomb gauge 
in $qed_{1+3}$, there is no need for indefinite metric in this gauge, \emph{i.e.}, the zero-mass 
longitudinal part of the current is gauged away, and one has a solution of Maxwell's 
equations (as an operator-valued, distributional identity)
	\begin{equation}
		\partial_{\nu}F^{\mu,\nu}(x) = -ej^{\mu}(x)
		\label{(8)}
\end{equation}
on the whole Hilbert space. This is an important ingredient 
in Buchholz's theorem \cite{Buch}, to which we come back in the sequel. 

The structure of the observable algebra is quite simple in  the Coulomb 
gauge: the field \eqref{(6)} is 
just the electric field, which is defined in terms of a massive scalar field, the 
quantities \eqref{(7)} are, in this gauge, rigorous versions of the (path-dependent) quantities
	\begin{equation}
		\psi(x) {\rm e}^{ ie \int_{x}^{y} {\rm d} t_{\mu} \; A^{\mu}(t)} \psi^{*}(y)
                \label{(8a)}
        \end{equation}
and  their adjoints (in the distributional sense), see \cite{LSwi} and \cite{RaWa}. 
In the case of $qed_{1+3}$, such quantities are plagued by infrared divergences, 
see the discussion in \cite{Steinmann1}. As a consequence of  the simple structure of 
the observable algebra, one arrives at a correct physical-mathematical picture 
of spontaneous symmetry breakdown (\cite{LSwi},\cite{RaWa}); in covariant 
gauges this picture is masked by the presence of spurious gauge excitations.

In \cite{StreWight}, pp.~107-110, it is shown that if the (n-point) \emph{Wightman functions} 
satisfy 
\begin{itemize}
\item [$a.)$] the relativistic transformation law; 
\item [$b.)$] the spectral condition; 
\item [$c.)$] hermiticity; 
\item [$d.)$] local commutativity; 
\item [$e.)$] positive-definiteness, 
\end{itemize}
then they are the \emph{vacuum expectation values of a field theory} satisfying 
the Wightman axioms, except, eventually,  the uniqueness of 
the vacuum  state. We refer to \cite{StreWight} or \cite{RSII}  for 
an account of Wightman theory, and for the description of these properties.  It has been shown
in \cite{LSwi}, \cite{RaWa} that 
$qed_{1+1}$ in the ``$\alpha=\sqrt{\pi}$'' gauge satisfies $a.)-e.)$. 
The crucial positive-definiteness condition~e.) has been 
shown in \cite{LSwi} to be a consequence of the positive-definiteness of a subclass 
of the n-point functions of the Thirring model~\cite{Thirrm} in the formulation of 
Klaiber~\cite{Klai}. Positive-definiteness of the Klaiber n-point functions was 
rigorously proved by Carey, Ruijsenaars and Wright \cite{CRW}. Uniqueness of the 
vacuum holds in each irreducible subspace of the (physical) Hilbert 
space~${\cal H}$~\cite{RaWa}, as a result of the cluster property; see also \cite{LSwi}.

We shall \emph{assume} $a.)-e.)$ for the 
$n$-point functions of the observable  fields, with, in addition, the following requirement
\begin{itemize}
\item [$f.)$] interacting fields are assumed to satisfy the singularity hypothesis (the forthcoming  
Definition 3.3).
\end{itemize}

The crucial mathematical reason for choosing a non-covariant gauge is, as we shall see, the positive-definiteness 
condition e.). Concerning the uniqueness of the vacuum, we shall assume it is valid by 
restriction to an irreducible component of ${\cal H}$, as in $qed_{1+1}$.

\subsection{The K\"all\'en-Lehmann representation}

A major dynamical issue in quantum field theory is the (LSZ or Araki-Haag) asymptotic 
condition (see, \emph{e.g.}, \cite{BuchSum} and references given there), which 
relates the theory, whose objects are the fundamental observable fields, to \emph{particles}, 
described by physical parameters (\emph{mass} and \emph{charge} in qed). This issue is equivalent 
to the renormalization (or normalization) of perturbative quantum field theory, which 
itself is related to the construction of continuous linear extensions of certain functionals, 
such as to yield well-defined Schwartz distributions (see \cite{Scharf} and references 
given there). On the other hand, in a non-perturbative framework, a theory of renormalization 
of masses and fields also exists, and ``has nothing directly to do with the presence of 
infinities'' (\cite{Weinb1}, p.~441, Sect.~10.3). We adopt a related proposal, which we formulate 
here, for simplicity, for a theory of a self-interacting scalar field $A$ of mass $m$ satisfying 
the Wightman axioms (modifications are mentioned in the sequel).  
We assume that  $A$ is an operator-valued 
tempered distribution on the Schwartz space ${\cal S}$ (see \cite{RSII}, Ch.~IX). 

\bigskip
We have the following  result, concerning the spectral representation of the 
two-point function $W_{2}$ (\cite{RSII}, p.~70, Theorem IX-34):

\begin{theorem}[The K\"{a}ll\'{e}n-Lehmann representation]
\label{th:2} 
	\begin{equation}
		W_{2}^{m}(x-y)= \langle \Omega,A(x)A(y)\Omega \rangle
		= \frac{1}{i} \int_{0}^{\infty} {\rm d}\rho(m_\circ^{2}) \; 
		\Delta_{+}^{m_\circ}(x-y) \; , 
		\label{(9)}
\end{equation}
where $\Omega$ denotes the vacuum vector, $x=(x_{0},\vec{x})$, and
	\begin{align}
		\Delta_{+}^{m_\circ}(x) =\frac{i}{2(2\pi)^{3}} 
		\int_{\mathbb{R}^{3}} {\rm d}^{3}\vec{k} \; 
		\frac{ {\rm e}^{-ix_{0}\sqrt{m_\circ^{2}+\vec{k}^{2}}
		+i\vec{x}\cdot\vec{k}}}{\sqrt{m_\circ^{2}+\vec{k}^{2}}} 
		\label{(10)}
	\end{align}
is the two-point function of the free scalar field of mass $m_\circ$, and $\rho$ is 
a polynomially-bounded measure on $[0,\infty)$, \emph{i.e.},
	\begin{equation}
		\int_{0}^{L} {\rm d} \rho(m_\circ^{2}) \le C(1+L^{N})
		\label{(11)}
\end{equation}
for some constants $C$ and $N$. It is further assumed that
	\begin{equation}
		\langle \Omega,A(f)\Omega \rangle = 0 \qquad \forall  f \in {\cal S} \; . 
		\label{(12)}
\end{equation}
\end{theorem}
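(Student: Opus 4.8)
The plan is to derive the representation directly from the Wightman properties $a.)$--$e.)$, via the spectral resolution of the energy--momentum operators combined with the Lorentz covariance of the scalar field $A$. First I would use translation invariance (part of the relativistic transformation law $a.)$) to write $W_2(x,y)=\langle\Omega,A(x)A(y)\Omega\rangle=W_2(x-y)$, a tempered distribution in $\xi=x-y$ by hypothesis. Writing $A(x)={\rm e}^{iP\cdot x}A(0){\rm e}^{-iP\cdot x}$ with $P$ the energy--momentum operator, inserting the spectral resolution of the identity $\mathbf{1}=\int {\rm d}E(p)$, and using $P\Omega=0$, I would obtain the momentum-space form $W_2(\xi)=\int {\rm e}^{-ip\cdot\xi}\,{\rm d}\mu(p)$, where ${\rm d}\mu(p)={\rm d}\langle A(0)\Omega,E(p)\,A(0)\Omega\rangle$ (understood in the smeared, distributional sense) is a \emph{positive} measure, the positivity of the underlying inner product being exactly the content of condition $e.)$.

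Next I would impose the spectral condition $b.)$, which forces ${\rm supp}\,\mu\subset\overline{V}_+$, the closed forward light cone. Since $A$ is a scalar, condition $a.)$ renders $W_2(\xi)$ invariant under the restricted Lorentz group, so $\mu$ is a Lorentz-invariant positive measure supported in $\overline{V}_+$. The key structural step is then the decomposition of such a measure into mass shells: every Lorentz-invariant positive measure on $\overline{V}_+$ can be written as ${\rm d}\mu(p)=\int_0^\infty {\rm d}\rho(m_\circ^2)\,\theta(p_0)\,\delta(p^2-m_\circ^2)\,{\rm d}^4p$ for a positive measure $\rho$ on $[0,\infty)$. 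Substituting this into the momentum-space representation and recognizing that $\int \theta(p_0)\,\delta(p^2-m_\circ^2)\,{\rm e}^{-ip\cdot\xi}\,{\rm d}^4p$ equals, up to the normalization fixed in \eqref{(10)}, the quantity $\tfrac{1}{i}\Delta_+^{m_\circ}(\xi)$, I would arrive at the claimed identity \eqref{(9)}. The assumption \eqref{(12)} that the one-point function vanishes guarantees that $\mu$ carries no mass at $p=0$, excluding a spurious vacuum (constant) contribution and leaving a genuine continuum starting at $m_\circ^2=0$.

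The remaining point---which I expect to be the main obstacle---is the polynomial bound \eqref{(11)}. The formal manipulations above are easy; the work lies in showing that the mass measure $\rho$ inherits polynomial growth from the fact that $W_2$ is a \emph{tempered} distribution rather than merely a positive, covariant one. Concretely, I would test $\mu$ against Lorentz-invariant Schwartz functions adapted to the shell $\{m_\circ^2\le L\}$ and use the finite order of $\mu$ as a tempered distribution (equivalently, a distributional bound by a fixed power of $(1+|p|^2)$) to estimate $\int_0^L {\rm d}\rho(m_\circ^2)\le C(1+L^N)$. Passing rigorously from ``$\mu$ has finite order'' to the concrete accumulated-mass estimate \eqref{(11)} is the technical heart of the argument, and it is precisely here that the temperedness hypothesis on $A$, as opposed to positivity or covariance alone, is genuinely needed.
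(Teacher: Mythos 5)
Your argument is correct and is essentially the proof the paper relies on: the paper gives no proof of its own but cites Theorem IX.34 of Reed--Simon, vol.~II, whose proof proceeds exactly as you describe (translation invariance and the spectral resolution of $P$ to get a positive measure $\mu$, the spectral condition and Lorentz invariance to reduce $\mu$ to a mass-shell superposition, condition \eqref{(12)} together with uniqueness of the vacuum to kill the point mass at $p=0$, and temperedness to extract the polynomial bound \eqref{(11)}). You have also correctly located the only genuinely technical step, namely passing from finite order of the tempered distribution to the accumulated-mass estimate on $\rho$.
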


Note that \eqref{(9)} is symbolic; for its proper meaning, which relies on \eqref{(11)},
see \cite{RSII}. In the next proposition, since there is only
a finite number of masses in nature, we assume a priori that the pure point part (which
could eventually include a dense point spectrum, i.e., accumulation points) is, in fact, discrete,
containing only a finite number of mass values. We further assume that this discrete part of the 
measure is associated with the (renormalized) physical masses existent in nature, \emph{i.e.}, just one for 
a scalar field (and latter the electron and the photon mass in qed). This assumption is verified
for the free fields (scalar, spinor, vector), with $Z=1$.

It should be remarked that a scalar field of mass $m$ creates, with non-vanishing probability,
the state of a particle of mass $m$ \emph{in the sense of Wigner} (see, e.g., \cite{Weinb1})
from the vacuum. In particular, the electron in qed is not a particle, although the photon is 
(expected to be) a particle. It is nevertheless true that stability in \cite{JaWre2} and
\cite{LLrel} is achieved by the introduction of ``dressed photons''.

\begin{proposition} For a scalar field of mass $m \ge 0$, the 
measure ${\rm d}\rho(m_\circ^{2}) $ 
appearing in the K\"{a}ll\'{e}n-Lehmann spectral representation 
allows a decomposition
\label{prop:1} 
	\begin{equation}
		{\rm d}\rho(m_\circ^{2}) = Z\delta(m_\circ^{2}-m^{2}) + d\sigma(m_\circ^{2}) \; , 
		\label{(13)}
\end{equation}
where
	\begin{equation}
		0 < Z < \infty
		\label{(14)}
\end{equation}
and
	\begin{equation}
		\int_{0}^{L} {\rm d} \sigma(m_\circ^{2}) \le C_{1}(1+L^{N_{1}})
		\label{(15)}
\end{equation}
for some constants $C_{1}$ and $N_{1}$. 
\end{proposition}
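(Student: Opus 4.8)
The plan is to read off the decomposition \eqref{(13)} directly from the Lebesgue decomposition of the polynomially-bounded measure $\rho$, and then to verify the three separate claims — finiteness of $Z$, the polynomial bound \eqref{(15)} on $\sigma$, and strict positivity of $Z$ — one at a time. First I would split $\rho = \rho_{\mathrm{pp}} + \rho_{\mathrm{cont}}$ into its pure-point and continuous parts. By the standing assumption recorded just before the statement (the point spectrum is discrete and, for a scalar field, reduces to the single physical mass value $m^{2}$), the pure-point part is the lone atom $\rho_{\mathrm{pp}} = Z\,\delta(m_\circ^{2}-m^{2})$ with $Z := \rho(\{m^{2}\})$, and one simply sets $\mathrm{d}\sigma := \rho_{\mathrm{cont}}$.

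The two quantitative bounds are then immediate consequences of \eqref{(11)}. Finiteness $Z < \infty$ holds because the atom at $m^{2}$ satisfies $Z = \rho(\{m^{2}\}) \le \int_{0}^{m^{2}+1} \mathrm{d}\rho(m_\circ^{2}) \le C\bigl(1+(m^{2}+1)^{N}\bigr) < \infty$. The bound \eqref{(15)} follows just as directly, since removing a nonnegative atom only decreases the measure of any interval: for every $L \ge 0$ one has $\int_{0}^{L} \mathrm{d}\sigma(m_\circ^{2}) = \int_{0}^{L} \mathrm{d}\rho(m_\circ^{2}) - Z\,\mathbf{1}_{[m^{2},\infty)}(L) \le \int_{0}^{L} \mathrm{d}\rho(m_\circ^{2}) \le C(1+L^{N})$, so it suffices to take $C_{1}=C$ and $N_{1}=N$.

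The substantive point — and the step I expect to be the main obstacle — is the strict positivity $Z > 0$ in \eqref{(14)}. Here I would identify $Z$, via the spectral theorem for the mass-squared operator $M^{2}=P_{\mu}P^{\mu}$ restricted to the cyclic subspace generated by the vectors $A(f)\Omega$, with a positive multiple of the squared norm of the projection $E(\{m^{2}\})A(f)\Omega$ of a field-generated state onto the Wigner mass-$m$ shell. Concretely, comparing the Fourier transform of \eqref{(9)} with the spectral resolution of $W_{2}$ shows that the weight $\rho$ assigns to the mass value $m^{2}$ is exactly the spectral weight carried by the one-particle states of mass $m$; positivity of $Z$ is therefore equivalent to the assertion that the scalar field of mass $m$ creates such a one-particle state from the vacuum with non-vanishing amplitude, which is precisely the physical input recorded in the remark preceding the proposition.

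Finally I would emphasize that this positivity is a genuine hypothesis of the present framework rather than a statement derivable from \eqref{(9)}--\eqref{(12)} alone: it encodes that $m$ is a bona-fide physical mass with an associated stable particle, and it isolates exactly the configuration to be contrasted, in the later corollaries, with the infraparticle or unstable-particle case in which the atom at $m^{2}$ dissolves into the continuum and $Z=0$. For the free field the input is the elementary identity $\rho = \delta(m_\circ^{2}-m^{2})$, yielding $Z=1$ and $\sigma=0$, which serves as a consistency check on the whole decomposition.
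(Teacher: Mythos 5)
Your proposal is correct and follows essentially the same route as the paper: the Lebesgue decomposition of the polynomially bounded measure $\rho$, the standing assumption that the pure point part reduces to a single atom at the physical mass $m^{2}$, and the bound \eqref{(11)} to control both $Z$ and $\sigma$. Your explicit observation that strict positivity $Z>0$ is an additional physical hypothesis rather than a consequence of \eqref{(9)}--\eqref{(12)} is exactly the point the paper itself concedes immediately afterwards in Remark~\ref{Remark 0.1}, where the general condition is weakened to $0\le Z<\infty$.
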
 

\begin{proof} 
By the Lebesgue decomposition (see, \emph{e.g.}, Theorems I.13, I.14, p.~22 of \cite{RSI})
	\begin{equation}
		{\rm d}\rho = {\rm d}\rho_{p.p.} + {\rm d}\rho_{s.c.} + {\rm d}\rho_{a.c.} \; , 
		\label{(16.1)}
\end{equation}
where p.p.~denotes the \emph{pure point}, s.c.~ denotes 
 the \emph{singular continuous} and 
a.c.~ denotes 
 the \emph{absolutely continuous} parts of ${\rm d}\rho$. 
By the assumptions, the pure point part of the measure is, in fact, discrete and, for 
a scalar field of mass $m$, we obtain
         \begin{equation}
		{\rm d}\rho_{p.p.}(m_\circ^{2}) = Z\delta(m_\circ^{2}-m^{2}) \; ,
		\label{(16.2)}
\end{equation}
where $Z$ satisfies \eqref{(14)}, and, by \eqref{(11)}, 
\eqref{(13)} and \eqref{(16.1)}, ${\rm d}\sigma$ satisfies \eqref{(15)}. 
\end{proof}

\begin{remark}
\label{Remark 0.1}
Of course, $Z=0$ in \eqref{(16.2)}, if there is no discrete component of mass $m$ in the total 
mass spectrum of the theory. In general $Z$ in each discrete component of $d\rho$ has only to satisfy
         \begin{equation}
               0 \le Z < \infty
               \label{(14.1)} \; , 
\end{equation}  
because of the positive-definiteness conditions e.) (or the positive-definite 
Hilbert space metric).
\end{remark}
 
\begin{remark}
\label{Remark 0.2}
Expression \eqref{(16.2)} corresponds precisely to (\cite{Weinb1}, p.~461, Equ.~(10.7.20)). 
Thus, Proposition~\ref{prop:1} is just a mathematical statement of the nonperturbative 
renormalization theory, as formulated by Weinberg. Thus, 
the physical interpretation of $Z$ is that $0< Z < \infty$ 
is the wave function renormalization constant, due to the 
fact that the physical field $A_{phys}$ is normalized (or renormalized) by the 
one-particle condition (\cite{Weinb1}, (10.3.6)) which stems from the LSZ (or 
Haag-Ruelle) asymptotic condition (see also Sections~2 and~5 of \cite{BuchSum} 
and references given there for the appropriate assumptions). A general field, 
as considered in \eqref{(9)}, does not have this normalization. By the same token, 
the quantity $m^{2}$ in \eqref{(16.2)} is interpreted as the physical (or renormalized) 
mass associated to the scalar field.  
\end{remark}

For $F_{\mu,\nu}$~and $\Psi$, we have the analogues of \eqref{(9)}, namely 
	\begin{align}
		& \langle F_{\mu,\nu}(x) \Omega, F_{\mu,\nu}(y)\Omega \rangle 
		= \int {\rm d} \rho_{ph}(m_\circ^{2})\int\frac{{\rm d}^{3}p}{2p_{0}} 
		(-p_{\mu}^{2}g_{\nu\nu}-p_{\nu}^{2}g_{\mu\mu}) {\rm e}^{ip \cdot (x-y)} \; , 
		\label{(19.1)}
	\end{align}
with $\mu \ne \nu$, no summations involved, and $p_{0}=\sqrt{ {\vec{p}\,}^{2}+m_\circ^{2}}$. 
Denoting spinor indices by $\alpha,\beta$, we have
	\begin{align}
		S_{\alpha,\beta}^{+}(x-y) 
		& = \langle \Omega, \Psi_{\alpha}(x)\bar{\Psi}_{\beta}(y) \Omega \rangle
		\label{(20.1)} \\
		& = \int_{0}^{\infty} {\rm d}\rho_{1}(m_\circ^{2}) S_{\alpha,\beta}^{+}(x-y;m_\circ^{2}) 
		+ \delta_{\alpha,\beta} \int_{0}^{\infty} {\rm d} \rho_{2}(m_\circ^{2}) 
		\Delta^{+}(x-y;m_\circ^{2}) \, , 
		\nonumber
	\end{align}
with ${\rm d} \rho_{ph}, {\rm d} \rho_{1}, {\rm d} \rho_{2}$ positive, polynomially bounded 
measures, and $\rho_{1}$ satisfying certain bounds with respect 
to $\rho_{2}$ (see \cite{Lehmann}, p.~350 for the notation). Again, as in \eqref{(16.2)},
	\begin{equation}
		{\rm d} \rho_{ph}(m_\circ^{2}) = Z_{3} \delta (m_\circ^{2}) 
		+ {\rm d} \sigma_{ph}(m_\circ^{2})
		\label{(19.2)}
	\end{equation}
and
	\begin{equation}
		{\rm d} \rho_{1}(m_\circ^{2}) = Z_{2}\delta(m_\circ^{2}-m_{e}^{2}) 
		+ {\rm d} \sigma_{1}(m_\circ^{2}) \; , 
		\label{(20.2)}
	\end{equation}
with $m_{e}$ the renormalized electron mass, according to conventional notation. 

\goodbreak 
We have, by the general condition \eqref{(14.1)},
	\begin{align}
		0 \le Z_{3} < \infty \; , 
		\label{(19.3)}
		\\
		0 \le Z_{2} < \infty \; . 
		\label{(20.3)}
	\end{align}
When $Z_{3} > 0$, the renormalized electron charge follows from 
(\cite{Weinb1}, (10.4.18)). Assumption \eqref{(12)}, which is also expected 
to be generally true on physical grounds, becomes
	\begin{align}
		\langle \Omega, F_{\mu,\nu}(f) \Omega \rangle & = 0 
		\quad \forall f \in {\cal S} \; , 
		\label{(19.4)}
		\\
		\langle \Omega, \Psi_{\alpha}(f) \Omega \rangle & = 0 \quad 
		\forall  f \in {\cal S} \mbox{ and } \alpha 
		\text{  a spinor index} \, . 
		\label{(20.4)}
	\end{align}

In summary, Proposition~\ref{prop:1} provides a rigorous (non-perturbative) definition 
of the wave-function renormalization constant. In the proof 
of Theorem~\ref{th:2} (\cite{RSII}, p.~70), the positive-definiteness condition 
e.) plays a major role. Thus, the definition of $Z$ and its range \eqref{(14.1)} (which 
depends on the positivity of the measure ${\rm d}\rho$) strongly hinge on the fact that 
the underlying Hilbert space has a positive metric. Parenthetically, the 
positive-definiteness condition on the Wightman functions is ``beyond the powers 
of perturbation theory'', as Steinmann aptly observes \cite{Steinmann1}.

\subsection{The Singularity Hypothesis}

As remarked in the introduction, one of the most important features of relativistic quantum field theory 
is the behaviour of the theory at large momenta (or large energies). Renormalization group theory \cite{Weinb3}
has contributed a significant lore to this issue (even if none of it has been made entirely rigorous): see, in particular,
the paper of Symanzik on the small-distance behavior analysis of the two-point functions in relativistic 
quantum field theory \cite{Sym}. It strongly suggests that the light-cone singularity of the two-point functions
of interacting theories is stronger than that of a free theory: this is expected even in asymptotically free
quantum chromodynamics, where the critical exponents are anomalous. We refer to this as the ``singularity hypothesis'',
which will be precisely stated in the next section.

\subsubsection{Steinmann Scaling Degree and a theorem}

In order to formulate the singularity hypothesis in rigorous terms, we recall the 
Steinmann scaling degree $sd$ of a distribution \cite{Steinmann}; for 
a distribution $u \in {\cal S}^{'}(\mathbb{R}^{n})$, let $u_{\lambda}$ denote 
the ``scaled distribution'', defined by 
	\[
		u_{\lambda}(f) \equiv \lambda^{-n} u(f(\lambda^{-1} \cdot)) \; . 
	\]
As $\lambda \to 0$, we expect that $u_{\lambda} \approx \lambda^{-\omega}$ for 
some $\omega$, the ``degree of singularity'' of the distribution $u$. Hence, we set
	\begin{equation}
		sd(u) \equiv \inf \, \bigl\{\omega \in \mathbb{R} \mid \lim_{\lambda \to 0} 
		\lambda^{\omega} u_{\lambda} = 0 \bigr\} \; , 
		\label{(24.1)}
	\end{equation}
with the proviso that if there is no $\omega$ satisfying the limiting condition above, 
we set $sd(u) = \infty$. For the free scalar field of mass $m \ge 0$ in four-dimensional
space-time, it is 
straightforward to show from the explicit form of the two-point function in terms of 
modified Bessel functions that
	\begin{equation}
		sd(\Delta_{+}) = 2 \; . 
		\label{(24.2)}
	\end{equation}
In \eqref{(24.2)}, and the forthcoming equations, we omit the mass superscript. 
From Theorem~\ref{th:2}, we have that for $f \in {\cal S}(\mathbb{R}^{4})$ the 
interacting two-point function satisfies
        \begin{align}
		W_{+}(f) = \int_{0}^{\infty}{\rm d}\rho(m_\circ^{2})\int_{\mathbb{R}^{3}}  
		\frac{d\vec{p}}{\sqrt{{\vec{p}\,}^{2}+m_\circ^{2}}} \; 
		\tilde{f} \left(\sqrt{{\vec{p}\, }^{2}+m_\circ^{2}},\vec{p}\, ) \right) \; . 
		\label{(24.3)}
	\end{align}
Here $\tilde{f} \in {\cal S}(\mathbb{R}^{4})$ denotes the Fourier transform of $f$.  

\begin{definition}
\label{def:1} 
We say that the \emph{singularity hypothesis} holds for an interacting scalar field if 
	\begin{equation}
		sd(W_{+}) > 2 \; . 
		\label{(24.4)}
	\end{equation}
\end{definition}

Further support for Definition 3.3 comes from the fact that the singularity hypothesis is indeed 
satisfied at finite orders of perturbation theory (if the interaction density has engineering dimension
larger than 2).

\begin{theorem}
\label{thm:3} If the total spectral mass is finite, \emph{i.e.},
	\begin{equation}
		\int_{0}^{\infty} {\rm d}\rho(a^{2}) < \infty \; , 
		\label{(24.5)}
	\end{equation}
then
	\begin{equation}
		sd(W_{+}) \le 2 \; ; 
		\label{(24.6)}
	\end{equation}
\emph{i.e.}, the scaling degree of $W_{+}$ cannot be strictly greater than 
that of a free theory, and thus, by Definition~\ref{def:1}, the singularity 
hypothesis~\eqref{(24.4)} is \emph{not} satisfied.
\end{theorem}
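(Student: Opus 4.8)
The plan is to compute the scaled distribution $(W_+)_\lambda$ explicitly from the spectral representation~\eqref{(24.3)}, to extract the universal free-field factor $\lambda^{-2}$, and then to invoke the finiteness hypothesis~\eqref{(24.5)} in order to pass to the limit $\lambda\to 0$ inside the spectral integral by dominated convergence. Concretely, I would first unwind the definition~\eqref{(24.1)} with $n=4$: writing $f_\lambda(x)=f(\lambda^{-1}x)$ one has $\widetilde{f_\lambda}(p)=\lambda^{4}\tilde f(\lambda p)$, hence $(W_+)_\lambda(f)=\lambda^{-4}W_+(f_\lambda)$. Inserting this Fourier transform into~\eqref{(24.3)} and changing variables $\vec q=\lambda\vec p$ on the mass shell, the Jacobian $\lambda^{-3}$ together with the factor $({\vec p\,}^{2}+m_\circ^{2})^{-1/2}=\lambda\,({\vec q\,}^{2}+\lambda^{2}m_\circ^{2})^{-1/2}$ produces exactly one overall power $\lambda^{-2}$, leaving
\[
(W_+)_\lambda(f)=\lambda^{-2}\int_{0}^{\infty}{\rm d}\rho(m_\circ^{2})\int_{\mathbb{R}^{3}}\frac{{\rm d}\vec q}{\sqrt{{\vec q\,}^{2}+\lambda^{2}m_\circ^{2}}}\;\tilde f\!\left(\sqrt{{\vec q\,}^{2}+\lambda^{2}m_\circ^{2}},\vec q\,\right).
\]
The prefactor $\lambda^{-2}$ is precisely the free-field scaling recorded in~\eqref{(24.2)}, so the entire matter reduces to the behaviour of the remaining integral as $\lambda\to 0$.

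Next I would show that this integral tends to a finite constant. Pointwise $\sqrt{{\vec q\,}^{2}+\lambda^{2}m_\circ^{2}}\to|\vec q\,|$, so the integrand converges to $|\vec q\,|^{-1}\tilde f(|\vec q\,|,\vec q\,)$, which no longer depends on $m_\circ$. The crucial point is to interchange this limit with the double integration, and this is exactly where hypothesis~\eqref{(24.5)} enters. Since $\tilde f\in{\cal S}$ and $\sqrt{{\vec q\,}^{2}+\lambda^{2}m_\circ^{2}}\ge|\vec q\,|$, for each $N$ there is a bound $|\tilde f(\sqrt{{\vec q\,}^{2}+\lambda^{2}m_\circ^{2}},\vec q\,)|\le C_{N}(1+{\vec q\,}^{2})^{-N}$ that is uniform in $\lambda\in(0,1]$ and in $m_\circ$; combined with $({\vec q\,}^{2}+\lambda^{2}m_\circ^{2})^{-1/2}\le|\vec q\,|^{-1}$ this dominates the integrand by the $m_\circ$-independent function $C_{N}(1+{\vec q\,}^{2})^{-N}|\vec q\,|^{-1}$. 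The latter is integrable in $\vec q$ (the singularity $|\vec q\,|^{-1}$ is harmless in three dimensions and $\tilde f$ decays rapidly) and, being constant in $m_\circ$, it is integrable against ${\rm d}\rho$ precisely because the total mass $\int_{0}^{\infty}{\rm d}\rho(m_\circ^{2})$ is finite. Dominated convergence then gives $\lambda^{2}(W_+)_\lambda(f)\to C(f):=\bigl(\int_{0}^{\infty}{\rm d}\rho\bigr)\int_{\mathbb{R}^{3}}|\vec q\,|^{-1}\tilde f(|\vec q\,|,\vec q\,)\,{\rm d}\vec q$, a finite limit.

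Finally I would read off the scaling degree. For any $\omega>2$ one has $\lambda^{\omega}(W_+)_\lambda(f)=\lambda^{\omega-2}\,[\lambda^{2}(W_+)_\lambda(f)]\to 0$, since the bracket converges to $C(f)$ while $\lambda^{\omega-2}\to 0$. Hence every $\omega>2$ belongs to the set defining~\eqref{(24.1)}, so $sd(W_+)\le 2$, which is the assertion~\eqref{(24.6)}. I expect the only genuine obstacle to be the dominated-convergence step: one must exhibit a single dominating function integrable against ${\rm d}\rho(m_\circ^{2})\,{\rm d}\vec q$ uniformly in $\lambda$, and it is exactly the finiteness~\eqref{(24.5)} of the total spectral mass that secures the integrability in $m_\circ$. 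Had the total mass been infinite, the $m_\circ$-integration of this bound would diverge, consistent with the singularity hypothesis~\eqref{(24.4)} being permitted in that regime.
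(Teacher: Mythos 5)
Your proposal is correct and follows essentially the same route as the paper: both rest on the scaled representation $W_{+,\lambda}(f)=\lambda^{-2}\,[\cdots]$ and on applying dominated convergence to the bracketed spectral integral, with the finiteness of $\int_{0}^{\infty}{\rm d}\rho(m_\circ^{2})$ securing the dominating function. The only difference is cosmetic --- the paper phrases the final step as a proof by contradiction while you conclude directly that every $\omega>2$ lies in the defining set of \eqref{(24.1)} --- and you usefully make explicit the dominating function that the paper merely asserts is ``readily seen''.
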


\begin{proof} 
The scaled distribution corresponding to $W_{+}$ is given by
	\begin{align}
		& W_{+,\lambda}(f) = \lambda^{-2}\int_{0}^{\infty}
		{\rm d}\rho(m_\circ^{2}) \int_{\mathbb{R}^{3}} 
		\frac{d\vec{p}}{\sqrt{{\vec{p}\,}^{2}+\lambda^{2}m_\circ^{2}}} \, 
		\tilde{f} \left( \sqrt{{\vec{p}\,}^{2}
		+\lambda^{2}m_\circ^{2}},\vec{p}\, \right) \; . 
		\label{(24.7)}
	\end{align} 
Assume the contrary to \eqref{(24.6)}, \emph{i.e.}, that $sd(W_{+})=\omega_{0}>2$. 
Then, by the definition of the $sd$, if $\omega < \omega_{0}$, one must have
	\begin{equation}
		\lim_{\lambda \to 0} \lambda^{\omega} W_{+,\lambda}(f) \ne 0 \; . 
		\label{(24.8)}
	\end{equation}
Choosing 
	\begin{equation}
		\omega = \omega_{0}-\delta >2
		\label{(24.9)}
	\end{equation}
in \eqref{(24.8)}, we obtain from \eqref{(24.7)} and \eqref{(24.8)} that
	\begin{align}
		\lim_{\lambda \to 0} \lambda^{\omega-2} 
		\left( \int_{0}^{\infty}{\rm d}\rho(m_\circ^{2}) \int_{\mathbb{R}^{3}}
		 \frac{d\vec{p} }{\sqrt{{\vec{p}\,}^{2}+\lambda^{2}m_\circ^{2}}}
		 \tilde{f}\left(\sqrt{{\vec{p}\,}^{2}+\lambda^{2}m_\circ^{2}},\vec{p}\, \right) 
		 \right)\ne 0 \; . 
		\label{(24.10)}
	\end{align}
The limit, as $\lambda \to 0$, of the term inside the brackets in \eqref{(24.10)}, 
is readily seen to be finite by the Lebesgue dominated convergence 
theorem due to the assumption \eqref{(24.5)} and the fact that 
$\tilde{f} \in {\cal S}(\mathbb{R}^{4})$; but this contradicts \eqref{(24.8)} 
because of \eqref{(24.9)}.
\end{proof}

\begin{corollary}
\label{cor:1}
The singularity hypothesis holds for an interacting scalar field only if 
$\int_{0}^{\infty} d\sigma(m_\circ^{2}) = \infty$. This necessary condition is independent
of the value of $0 \le Z < \infty$.
\end{corollary}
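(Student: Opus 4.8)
The plan is to obtain this statement as an immediate contrapositive of Theorem~\ref{thm:3}, combined with the spectral decomposition furnished by Proposition~\ref{prop:1}. First I would note that Theorem~\ref{thm:3} asserts that finiteness of the total spectral mass, \eqref{(24.5)}, forces $sd(W_{+}) \le 2$, \eqref{(24.6)}. Taking the contrapositive, whenever the singularity hypothesis \eqref{(24.4)} holds, that is $sd(W_{+}) > 2$, the total spectral mass cannot be finite, so that
\[
	\int_{0}^{\infty} {\rm d}\rho(m_\circ^{2}) = \infty \; .
\]

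Second, I would insert the decomposition \eqref{(13)} and integrate it over $[0,\infty)$. The atom at $m_\circ^{2} = m^{2}$ contributes exactly $Z$, whence
\[
	\int_{0}^{\infty} {\rm d}\rho(m_\circ^{2})
	= Z + \int_{0}^{\infty} {\rm d}\sigma(m_\circ^{2}) \; .
\]
Since $Z$ is finite by \eqref{(14.1)}, the left-hand side diverges if and only if $\int_{0}^{\infty} {\rm d}\sigma(m_\circ^{2}) = \infty$. Chaining this equivalence with the first step yields the asserted necessary condition: the singularity hypothesis forces the continuous part ${\rm d}\sigma$ to have infinite total mass.

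Finally, I would record the independence of $Z$. Because one merely adds the finite number $Z$ to $\int_{0}^{\infty} {\rm d}\sigma$, the particular value of $Z$ in the admissible range $0 \le Z < \infty$ can never affect whether the sum is infinite, so the divergence of ${\rm d}\sigma$ is required regardless of $Z$. I anticipate no genuine obstacle in this argument, as it is a direct bookkeeping consequence of the two earlier results; the only point deserving care is the verification that the discrete part contributes precisely the finite quantity $Z$ and nothing more, which is immediate from \eqref{(16.2)}. Consequently any divergence of the total spectral mass must be carried entirely by the continuous part ${\rm d}\sigma$, which is exactly the content of the claim.
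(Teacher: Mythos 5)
Your argument is correct and is precisely the reasoning the paper intends: the corollary is the contrapositive of Theorem~\ref{thm:3} combined with the decomposition \eqref{(13)} and the finiteness of $Z$ from \eqref{(14.1)}, so the divergence of the total spectral measure must be carried by ${\rm d}\sigma$. The paper leaves this as an immediate consequence without a written proof, and your bookkeeping fills it in exactly as intended.
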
 

One importance of the above theorem, which is our main result, 
and especially of its corollary
is that it provides, as we shall see next, a mathematical foundation for the forthcoming 
interpretation of the condition
        \begin{equation}
		Z=0 \; . 
		\label{(24.11)}
	\end{equation}

\subsubsection{The ETCR hypothesis and its consequences for the singularity hypothesis}

For the purposes of identification with Lagrangian field theory, one may equate the $A(.)$ of
\eqref{(9)} with the ``bare'' scalar field $\phi_{B}$ (\cite{Weinb1}, p. 439, see also
\cite{IZ} and \cite{Haag}, whereby
        \begin{equation}
		A = \sqrt{Z} A_{phys}
		\label{(24.12)}
	\end{equation}
under the condition \eqref{(14)}. Under the same condition \eqref{(14)}, the assumption of 
equal time commutation relations (ETCR) 
for the physical fields may be written (in the distributional sense)
		\begin{equation}
			\left[ \frac{\partial A_{phys}(x_{0},\vec{x}\,)}{\partial x_{0}},
			A_{phys}(x_{0},\vec{y}\,) \right] = -\frac{i}{Z} \; \delta(\vec{x}-\vec{y}\,) \; . 
			\label{(24.13)}
		\end{equation}
Together with \eqref{(9)} and \eqref{(24.12)}, \eqref{(24.13)} yields (\cite{Weinb1}, (10.7.18)):
	\begin{equation}
		1 = \int_{0}^{\infty} {\rm d}\rho(m_\circ^{2}) \; . 
		\label{(25)}
	\end{equation} 

Since $d\sigma$ in \eqref{(13)} is a positive measure, we obtain from \eqref{(25)} the inequality 
	\begin{equation}
		Z \le 1
		\label{(26.1)}
	\end{equation}
(\cite{Weinb1}, p.~361). We thus find, comparing \eqref{(25)} with corollary ~\ref{cor:1}:

\begin{corollary}
\label{cor:2}
The singularity hypothesis is incompatible with the ETCR hypothesis. Thus, \eqref{(26.1)} 
is not generally valid.
\end{corollary}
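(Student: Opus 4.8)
The plan is to exploit the fact that the ETCR hypothesis has already been shown to force a normalization of the total spectral measure, and then to feed this directly into Theorem~\ref{thm:3}. First I would observe that the chain of identities \eqref{(24.13)}, \eqref{(9)}, and \eqref{(24.12)} yields the sum rule \eqref{(25)}, namely $\int_0^\infty {\rm d}\rho(m_\circ^2) = 1$. The key point is that this is not merely a bound but an exact normalization, and in particular it gives $\int_0^\infty {\rm d}\rho(m_\circ^2) < \infty$, so that the total spectral mass is finite in the precise sense of hypothesis \eqref{(24.5)}.

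Next I would apply Theorem~\ref{thm:3} verbatim: finiteness of the total spectral mass implies $sd(W_+) \le 2$, and hence, by Definition~\ref{def:1}, the singularity hypothesis \eqref{(24.4)} fails, since that hypothesis demands $sd(W_+) > 2$. This is the crux of the incompatibility: the ETCR hypothesis implies a condition (finite total spectral mass) that is precisely the hypothesis of Theorem~\ref{thm:3}, whose conclusion \eqref{(24.6)} negates the singularity hypothesis. The two assumptions therefore cannot hold simultaneously. For the second assertion, I would argue by contraposition. The inequality \eqref{(26.1)}, $Z \le 1$, is derived \emph{under} the ETCR assumption, via \eqref{(25)} together with the positivity of the residual measure ${\rm d}\sigma$. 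But interacting fields are assumed (requirement f.) to satisfy the singularity hypothesis, which by the incompatibility just established excludes the ETCR. Consequently the hypothesis from which \eqref{(26.1)} was deduced does not hold for interacting theories, so $Z \le 1$ cannot be asserted in general.

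The main obstacle here is logical rather than computational: the argument is essentially a one-line invocation of Theorem~\ref{thm:3}, so the care required is in the bookkeeping. I would want to make sure that Theorem~\ref{thm:3} is being applied to the full Kall\'en--Lehmann measure ${\rm d}\rho$ and not to its continuous part ${\rm d}\sigma$ alone, and that the normalization \eqref{(25)} is genuinely a \emph{consequence} of ETCR (through \eqref{(24.13)}) rather than an independent input. A subtler conceptual point worth flagging is that the incompatibility is asymmetric in spirit: it is the singularity hypothesis \eqref{(24.4)}, taken as the defining feature of an interacting theory, that survives, while the ETCR \eqref{(24.13)} are the assumption to be discarded. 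This matches the paper's stated physical motivation that interacting fields cannot in general be restricted to sharp times, so that \eqref{(24.13)} should fail precisely when \eqref{(24.4)} holds.
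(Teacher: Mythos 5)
Your proposal is correct and follows essentially the same route as the paper: the ETCR-derived sum rule \eqref{(25)} gives finite total spectral mass, which by Theorem~\ref{thm:3} (equivalently Corollary~\ref{cor:1}, which is how the paper phrases it) forces $sd(W_+)\le 2$ and thus negates the singularity hypothesis, while \eqref{(26.1)} falls with the ETCR from which it was deduced. Your bookkeeping remarks (applying the theorem to the full measure ${\rm d}\rho$, and the asymmetry of which hypothesis is discarded) accurately reflect the paper's intent.
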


In perturbation theory, 
$Z_{3}(\Lambda)$ (\cite{Weinb1}, p.~462) satisfies \eqref{(26.1)} 
for all ultraviolet 
cutoffs~$\Lambda$, but it is just this condition which relies on the ETCR assumption 
and is not expected to be generally valid. In the limit $\Lambda \to \infty$, however, 
$Z_{3}(\Lambda)$  tends to~$- \infty$ and hence
violates \eqref{(14)} maximally.  In fact, 
\eqref{(14)} is violated  even for finite, sufficiently large $\Lambda$.

Although Corollary ~\ref{cor:2} is strong, in that it excludes the ETCR entirely, together with
its consequences, it trivially implies that \emph{all} values of
$Z$ compatible with \eqref{(14.1)} are allowed, without singling out the condition $Z=0$.

In this connection, we should remark that the ``bare'' (conventional) field \eqref{(24.12)}
in \eqref{(9)} should not be identified with the Wightman field, because the latter is supposed
to have a particle interpretation, through
scattering theory (see Remark~\ref{Remark 0.2}). This is
due to the one-particle space normalization (see also Jost's monograph \cite{Jost}, (1), p.120
for a detailed discussion of this point in connection with the Haag-Ruelle theory). We are thus 
led to replace the $A$ on the l.h.s.~ of \eqref{(9)} by the physical, renormalized field \eqref{(24.13)}.
When \eqref{(14)} holds, the resulting spectral measure is, of
course, $dg(m_{\circ}^{2}) = \frac{1}{Z} d\rho(m_{\circ}^{2})$ which satisfies, by \eqref{(25)},
\begin{equation}
\frac{1}{Z} = \int dg(m_{\circ}^{2})
\label{(26.2)}
\end{equation}
\eqref{(26.2)} is Wightman's (17) in his famous paper \cite{WightPR}. The same
formula is found in Lehmann \cite{Lehmann}, K\"{a}ll\'{e}n \cite{Kallen}, and Barton \cite{Barton1}. Of 
course, Theorem 3.2 remains valid as long as \eqref{(14)} holds. As a consequence, we have the following result:

\begin{corollary}
\label{cor:3}
If, in \eqref{(9)}, the field $A$ is assumed to be the physical field \eqref{(24.12)}, and the ETCR \eqref{(24.13)} is assumed
as well, only the value $Z=0$ remains in \eqref{(14.1)} as possibly compatible with the singularity hypothesis.
\end{corollary}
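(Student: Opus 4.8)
The plan is to re-run the mechanism of Theorem~\ref{thm:3} (equivalently Corollary~\ref{cor:1}), but with the spectral measure now attached to the \emph{physical} field rather than to the bare field of \eqref{(9)}. First I would record the bookkeeping: once $A$ in \eqref{(9)} is interpreted through \eqref{(24.12)} as $A_{phys}$, the two-point function whose scaling degree enters Definition~\ref{def:1} is controlled by the rescaled measure $dg(m_\circ^2)=\frac{1}{Z}\,d\rho(m_\circ^2)$. For $Z>0$ this $dg$ is a genuine positive, polynomially bounded measure (inherited from $d\rho$ through the constant factor $1/Z$), so the Käll\'en--Lehmann framework and hence Theorem~\ref{thm:3} apply to $A_{phys}$ just as they do to $A$. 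The second preparatory fact is the ETCR identity \eqref{(26.2)}, namely $\int dg(m_\circ^2)=\frac{1}{Z}$, which follows by integrating $dg=\frac{1}{Z}d\rho$ and using the normalization \eqref{(25)} forced by the ETCR \eqref{(24.13)}.

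With these in hand the core argument is short. I would apply Theorem~\ref{thm:3} verbatim, reading it with $dg$ in place of $d\rho$: if $\int dg(m_\circ^2)<\infty$ then the scaling degree of the physical two-point function cannot exceed $2$, so the singularity hypothesis \eqref{(24.4)} fails. Equivalently, by Corollary~\ref{cor:1}, the singularity hypothesis for the physical field \emph{requires} the total physical spectral mass to be infinite, $\int dg(m_\circ^2)=\infty$. Combining this requirement with the ETCR identity $\int dg=1/Z$ from \eqref{(26.2)} forces $1/Z=\infty$, and within the admissible range \eqref{(14.1)}, $0\le Z<\infty$, this can be met only by $Z=0$. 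Thus every strictly positive finite $Z$ is excluded.

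The one point that deserves care — and which I expect to be the main subtlety rather than a computational difficulty — is the status of the endpoint $Z=0$ itself, where the argument is genuinely non-uniform. For each $Z$ with $0<Z<\infty$ the nonzero factor $\sqrt{Z}$ relating $A$ and $A_{phys}$ leaves the scaling degree unchanged, $dg=\frac{1}{Z}d\rho$ is well defined with finite total mass $1/Z$, and Theorem~\ref{thm:3} produces an honest contradiction with the singularity hypothesis. At $Z=0$, however, both \eqref{(24.12)} and the rescaling $dg=\frac{1}{Z}d\rho$ degenerate, so the hypotheses of the argument no longer apply and no contradiction is generated; this is exactly why the corollary is correctly phrased as a \emph{necessary-condition} statement, asserting only that $Z=0$ \emph{remains} possibly compatible rather than that it is compatible. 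I would close by underscoring the contrast with Corollary~\ref{cor:2}: there the bare normalization pins $\int d\rho=1$ for every $Z$, so the singularity hypothesis is flatly incompatible with the ETCR and no value of $Z$ is singled out, whereas the physical normalization replaces the fixed total mass $1$ by the $Z$-dependent quantity $1/Z$, and it is precisely this $Z$-dependence that upgrades the incompatibility into the selection rule $Z=0$.
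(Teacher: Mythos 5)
Your proposal is correct and follows essentially the same route as the paper: the authors likewise pass to the physical spectral measure $dg(m_\circ^{2})=\frac{1}{Z}\,d\rho(m_\circ^{2})$, invoke the ETCR identity \eqref{(26.2)}, note that the finiteness theorem remains valid whenever \eqref{(14)} holds, and conclude that the singularity hypothesis forces $\int dg=\infty$, hence $1/Z=\infty$, leaving only $Z=0$ in the range \eqref{(14.1)}. Your remark on the degeneracy at the endpoint $Z=0$ matches the paper's own caveat that \eqref{(26.2)} becomes misleading when $\int dg(m_\circ^{2})=\infty$.
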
  

It is \eqref{(26.2)} which seems to underlie the general belief that  $Z=0$ is generally
true in an interacting theory, a conjecture which has been made even, for instance, by the 
great founders of axiomatic (or general) quantum 
field theory, Wightman and Haag. Indeed, 
in \cite{Wight}, p.~201, it is observed that ``$\int_{0}^{\infty} {\rm d}\rho(m_\circ^{2}) = \infty$ 
is what is usually meant by the statement that the field-strength renormalization 
is infinite''. This follows from \eqref{(26.2)}, with ``field-strength renormalization'' interpreted 
as $\frac{1}{Z}$. The connection with the singularity hypothesis comes 
next (\cite{Wight}, p.~201), with the observation that, by \eqref{(9)}, $W_{2}$ will 
have the same singularity, as $(x-y)^{2}=0$, as does $\Delta_{+}(x-y;m^{2})$. 
As for Haag, he remarks (\cite{Haag}, p.~55):``In the renormalized perturbation 
expansion one relates formally 
the true field $A_{phys}$ to the canonical field $A$ (our notation) which satisfies 
\eqref{(24.13)}, where $Z$ is a constant (in fact, zero). This means that the fields 
in an interacting theory are more singular objects than in the free theory, and we 
do not have the ETCR.'' 

Although both assertions above seem to substantiate the conjecture that $Z=0$ is expected to 
be a \emph{general} condition for interacting fields, it happens, however, that the resulting ``$\infty \times \delta$'' 
behavior of \eqref{(24.13)}, while suggestive, is entirely misleading. In other, equivalent, words, the
relation \eqref{(26.2)} is rigorously justified if $\int dg(m_{\circ}^{2})<\infty$ as shown by Wightman in 
\cite{WightPR}, p. 863, but in the opposite case $\int dg(m_{\circ}^{2})=\infty$, as required by Theorem 3.2,
\eqref{(26.2)} is misleading, as the example in (\cite{WightPR}, p. 863) demonstrates.  

It follows from Corollary~\ref{cor:2} and Corollary~\ref{cor:3} that 
for both choices of fields in \eqref{(9)}, $Z=0$ is not generally true. The latter case
hinges on the fact that the ETCR is not
generally valid for interacting fields, as briefly reviewed in the 
forthcoming paragraph. We conclude
that the singularity hypothesis opens the possibility of the 
non-universal validity of the equation $Z=0$.

The hypothesis of ETCR has been in serious doubt for a long time, see, 
\emph{e.g.}, the remarks in \cite{StreWight}, p.~101. Its validity has been 
tested \cite{WFW} in a large class of models in two-dimensional space-time; 
the Thirring model \cite{Thirrm}, the Schroer model \cite{Schrm}, the 
Thirring-Wess model of vector mesons interacting with zero-mass 
fermions (see \cite{ThirrWessm}, \cite{DubTar}), and the Schwinger model \cite{Schw}, 
using, throughout, the formulation of Klaiber \cite{Klai} for the Thirring model, 
and its extension to the other models by Lowenstein and Swieca \cite{LSwi} - 
for the Schwinger model, the previously mentioned noncovariant gauge ``$\alpha = \sqrt{\pi}$'' 
was adopted. Except for the Schwinger model, whose special canonical structure is due to 
its equivalence (in an irreducible sector) to a theory of a free scalar field of positive mass, 
the quantity
	\begin{equation}
		\{ \psi(x),\psi(y) \} - \langle \Omega,\{\psi(x),\psi(y)\}\Omega \rangle   
		 \cdot \mathbf{1} \; , 
	\end{equation}
where the $\psi$'s are the interacting fermi fields in the models
and $\{ \, . \, , \, . \,  \}$ denotes 
the anti-commutator and $\Omega$ denotes the vacuum, do not exist in the equal 
time limit as operator-valued distributions, for a certain range of coupling constants. 
Two different definitions of the equal time limit were used and compared, one of them 
due to Schroer and Stichel \cite{SchrSti}.
The models also provide examples of the validity of the singularity hypothesis (for the currents, 
analogous assertions hold if the commutator is used in place of the anti-commutator).
Thus, the ETCR is definitely not true in general.

Although (see Remark~\ref{Remark 0.2}), when $0<Z< \infty$, $Z$ is interpreted as
the non-perturbative field strength renormalization, relating ``bare'' fields to physical
fields, as in \eqref{(24.13)}, the remaining case \eqref{(24.11)} remains to be understood.
As stated in (\cite{Weinb1}, pg. 461), ``the limit $Z=0$ has an interesting interpretation
as a condition for a particle to be composite rather than elementary''. This brings us to our next topic.

\section{A proposal for the meaning of the condition $Z=0$: the presence of massless and unstable particles} 

Buchholz \cite{Buch} used Gauss' law to show that the discrete spectrum 
of the mass operator
	\begin{equation}
		P_{\sigma} P^{\sigma} = M^{2} = P_{0}^{2}-\vec{P}^{2}
		\label{(27)}
	\end{equation}
in a charged sector is empty. Above, $P^{0}$ is the generator of time translations in the physical 
representation, \emph{i.e.}, the physical hamiltonian $H$, and $\vec{P}$ is the physical 
momentum. This fact is interpreted as a confirmation of the phenomenon that, in a
charged sector, and given certain interactions (satisfying Gauss' law) between massless 
particles and others carrying an electric charge, the latter are converted to infraparticles
and are accompanied by clouds of soft photons.

Buchholz formulates adequate assumptions which must be valid in order that one may 
determine the electric charge of a physical state $\Phi$ with the help of Gauss' law 
		\begin{equation}
			e\langle \Phi, j_{\mu} \Phi \rangle= \;
                        \langle \Phi, \partial^{\nu}F_{\nu,\mu} \Phi \rangle \; . 
			\label{(28)}
		\end{equation}
\eqref{(28)} is assumed to hold in the sense of distributions on ${\cal S}(\mathbb{R}^{n})$.

When endeavouring to apply Buchholz's theorem to concrete models such as $qed_{1+3}$, 
problems similar to those occurring in connection with the charge superselection 
rule \cite{StrWight} arise. The most obvious one is that Gauss' law 
\eqref{(28)} is only expected to be 
valid (as an operator equation in the distributional sense) in non-covariant 
gauges (see \eqref{(8)}). Recalling \eqref{(20.3)}, we find that, in an interacting theory 
satisfying the axioms a.) to f.) for
observable fields, in which massless particles (photons), as well as infraparticles
(``electrons with their photon clouds'') occur in a charged sector - that is, in quantum
electrodynamics in Buchholz's characterization \cite{Buch}-, we have:

\begin{corollary}
\label{cor:4} 
        \begin{equation}
		Z_{2} = 0 \; . 
		\label{(29)}
	\end{equation}
\end{corollary}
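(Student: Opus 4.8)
The plan is to read off $Z_2$ as the weight that the Källén-Lehmann measure $\rho_1$ assigns to a single, isolated mass value $m_e^2$, and then to invoke Buchholz's theorem to exclude any such isolated contribution in a charged sector. First I would recall that the spinor field $\Psi$ is charge-carrying, so the states $\Psi_\alpha(f)\Omega$ lie in a charged sector of the physical Hilbert space $\mathcal{H}$, on which the mass operator $M^2 = P_\sigma P^\sigma$ of \eqref{(27)} acts. The spectral representation \eqref{(20.1)}, together with the decomposition \eqref{(20.2)}, exhibits $\rho_1$ as the measure controlling the two-point function of $\Psi$; its discrete piece $Z_2\,\delta(m_\circ^2 - m_e^2)$ is precisely the contribution of a single mass-shell at $m_e^2$.

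The key step is to identify $Z_2 > 0$ with the presence of a point eigenvalue $m_e^2$ in the spectrum of $M^2$. Concretely, I would insert the spectral resolution of the energy-momentum operators $P^\mu$ between the fields in \eqref{(20.1)} and compare the result with the free two-point functions $S^+_{\alpha,\beta}(\,\cdot\,;m_\circ^2)$ and $\Delta^+(\,\cdot\,;m_\circ^2)$. The delta-function term in \eqref{(20.2)} survives if and only if the projection of the charged state $\Psi_\alpha(f)\Omega$ onto the eigenspace of $M^2$ at eigenvalue $m_e^2$ is nonzero; equivalently, $Z_2$ measures the probability with which $\Psi$ creates, from the vacuum, a Wigner one-particle state of mass $m_e$. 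Thus $Z_2 > 0$ would force $m_e^2$ to belong to the \emph{discrete} (point) spectrum of $M^2$ in this charged sector.

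Now I would apply Buchholz's theorem \cite{Buch}: under the stated hypotheses — Gauss' law \eqref{(28)} holding as an operator identity in the distributional sense (which is why a non-covariant gauge such as the Coulomb gauge, guaranteeing \eqref{(8)}, is essential), together with the presence of massless photons in the charged sector — the discrete spectrum of $M^2$ is empty. Combined with the previous step, the eigenspace of $M^2$ at $m_e^2$ is then trivial, so its overlap with $\Psi_\alpha(f)\Omega$ vanishes for all $f$, and hence $Z_2 = 0$, which is \eqref{(29)}. The positivity range \eqref{(20.3)} thereby collapses to the single admissible value.

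The main obstacle I anticipate is not the final inference but the verification that Buchholz's hypotheses genuinely apply within the axiomatic framework a.)--f.): one must ensure that Gauss' law \eqref{(28)} holds as an operator-valued distributional identity, and not merely in expectation values, which is exactly the delicate gauge-dependent point flagged after \eqref{(28)}, and that the charged sector indeed carries the soft-photon clouds converting the electron into an infraparticle. Making the identification of the discrete Källén-Lehmann weight with a genuine eigenvalue of $M^2$ fully rigorous — rather than at the level of a formal insertion of intermediate states — will also require care, since \eqref{(20.1)} is a statement about the field two-point function and the passage to the spectral projections of $M^2$ must be justified through the spectral theorem.
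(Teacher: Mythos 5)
Your proposal is correct and follows essentially the same route as the paper: the paper derives Corollary 4.5 precisely by noting that $Z_{2}$ is the weight of the discrete mass point $m_{e}^{2}$ in the K\"all\'en--Lehmann measure $\rho_{1}$ of the charge-carrying field $\Psi$, and then invoking Buchholz's theorem that the discrete spectrum of $M^{2}$ is empty in a charged sector satisfying Gauss' law \eqref{(28)} (which is why the non-covariant gauge and \eqref{(8)} are needed). Your additional elaboration of the identification of $Z_{2}$ with the spectral projection of $M^{2}$ at $m_{e}^{2}$ only makes explicit what the paper leaves implicit, and your stated caveats coincide with the ones the paper itself flags.
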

  
It is interesting to recall, in connection with Corollary ~\ref{cor:4}, that in \cite{JaWre2}, 
\emph{both} the photon field  and the electron-positron field are ``dressed''. The photon is,
however, believed to be a particle \cite{Duch}, and it might be expected that the ``clouds''
around the photon vanish asymptotically. Indeed, the condition $Z_{3}=0$ does not characterize
massless particles (photons), see \cite{BuchCTMB}; only \eqref{(19.3)} remains true.

We now return to the subject 
of \emph{composite} or \emph{unstable particles}.  Classically speaking, these are 
particles whose field does not appear in the 
Lagrangian (\cite{Weinb1}, p.~461, \cite{Weinb4}). We were, however, unable to render this notion precise
even in the classical case (and we thank the referee for convincing us of this fact).

There are, as yet, no rigorous results on the quantum theory, except
for cutoff theories, see \cite{Barbaroux} and references given there. Since
the very notion of particle involves the Poincar\'{e} group, there are
serious  difficulties with the concept of unstable particle in a theory
where the cutoffs are not removed. For a model of Galilei invariant
molecular dynamics with particle production, see \cite{HoJa}.

The condition $Z=0$ appears, however, in 
various non-rigorous approaches to the concept 
of unstable particle \cite{Luk}, \cite{Velt}, \cite{Licht}. 
Turning to scalar fields for simplicity, we consider the case of a scalar particle $C$, 
of mass $m_{C}$, which may decay into a set of two (for simplicity) stable particles, each 
of mass $m$. We have energy conservation in the rest frame of $C$,  \emph{i.e.}, 
	\[
		m_{C}=\sum_{i=1}^{2} \sqrt{{\vec{q}_{i}\,}^{2}+m_{i}^{2}} 
		\ge \sum_{i=1}^{2} m_{i} \; , 
	\]
with $m_{i}=m,i=1,2$, and $\vec{q}_{i}$  the 
momenta of the two particles in the rest frame of $C$:
	\begin{equation}
		m_{C} > 2m  \; . 
		\label{(30)}
        \end{equation}

In order to check that 
        \begin{equation}
           Z_{C} = 0
              \label{(31)}
        \end{equation} 
when \eqref{(30)} holds, while $0<Z_{C}<\infty$ is valid
in the stable case $m_{C} < 2m$, in a model, we are beset with the difficulty to obtain information 
on the two-point function. An exception are those rare cases
in which the (Fock) zero particle state is persistent (\cite{HeppB}),\emph{i.e.}, Lee-type models.

The quantum model of Lee type of a composite (unstable) particle, 
satisfying \eqref{(30)}, where \eqref{(31)}
was indeed found, is that of Araki et al.~\cite{AMKG}. Unfortunately, however, the (heuristic) results
in \cite{AMKG} have one major defect: their model contains ``ghosts''. The paper \cite{HouJou}, cited
by Weinberg, assumes, however, the inequality opposite to \eqref{(30)}, and so is concerned with
stable particles. In fact, their reference to \eqref{(30)} treats it as the \emph{general} conjecture
based on \cite{WightPR} and \cite{Kallen} previously referred to.

There exists, however, a ghostless version of
of the model treated heuristically in \cite{AMKG}, with the correct kinematics, due to 
Hepp (Theorem 3.4, p.~54, of \cite{HeppB}). In this version, the masses in \eqref{(30)},
which are, of course, renormalized masses, may be determined
rigorously from the selfadjointness of the renormalized Hamiltonian.
It is an open problem of great importance to carry out this investigation in detail: 
it would be the first
rigorous model of unstable (composite) particles in quantum field theory.

For atomic resonances, the model in \cite{AMKG} may be treated rigorously, see \cite{Wreszunst}.

\section{A proposal for a criterion for quark confinement in quantum 
chromodynamics (qcd)}

The fundamental objects of quantum chromodynamics (qcd) (\cite{Weinb3}, 
Section 18.7) are the
color gauge-covariant field strength tensor $F_{a}^{\mu\nu}$, where $a=1,2,3$ 
is the colour index and
$\mu,\nu$ are Minkowski space indexes, and the quark field $\Psi_{a}$. 
The color gauge vector
potential $A_{a}^{\mu}$ describes massless gluons. The unobservability of 
quarks has never been explained,
but, in the seventies and eighties, several models, notably the Schwinger 
model \cite{Schw}, \emph{i.e.}, the 
electrodynamics of massless electrons in two spacetime dimensions, 
have been suggested as models of the 
confinement of quarks. In this model, the rigorous version of the 
observable \eqref{(8a)} (\cite{LSwi},
Section IV) is a bilocal quantity which creates a charge dipole with 
an electric field in between, according
to Gauss' law: this picture has been analysed in great detail by 
Casher, Kogut and Susskind \cite{CaKoSu}.
The electron field becomes a functional of the photon field, which acquires a positive mass: let the corresponding 
field at time zero be denoted by $\phi$. If 
$\Omega$ is the Fock vacuum, we define the dipole state (corresponding to placing the ``string'' at time
$t=0$; we omit the time variable for simplicity):
        \begin{equation}                      
        \label{(32)}
                \Psi_{R,\epsilon} = T_{R,\epsilon} \Omega
        \end{equation}
where
        \begin{equation}
        \label{(33)}
                T_{R,\epsilon}=\exp[i  \sqrt{\pi} \, \pi(g_{R,\epsilon})] \; . 
        \end{equation}
Above $\pi$ denotes the field momentum operator at zero time, 
conjugate to $\phi$ and $g_{R,\epsilon}$ is the function
        \begin{equation}
        \label{(41)}
                g_{R,\epsilon}(x) = \begin{cases}
                                   1 & \mbox{if $0\le x\le R$}\, ;\\
                                   0 & \mbox{if $x \le -\epsilon$ or $x \ge R+\epsilon$} \, .
                                   \end{cases}
        \end{equation}                                       
The above definitions follow \cite{LSwi}, pp.~183--185, to which we refer
the reader for more details, except that we introduced $\epsilon>0$ fixed 
and let $R \to \infty$, for reasons of rigor.
The connection between $\phi$ and the electric field (electromagnetic 
field tensor $F_{01}$) at time zero is thereby given as
        \begin{equation}
        \label{(42)}
               \phi(x) = F_{01}(x) \; . 
        \end{equation} 
As in (\cite{BarW}, Appendix A, where in (A.1) the mass term 
should be replaced by $ \frac{e^{2}}{\pi}\phi(x,0)^{2}$ 
inside the Wick dots), we consider the family of states
        \begin{equation}
        \label{43)}
               \omega_{R,\epsilon}(\, \cdot\, ) 
               = \langle \Psi_{R,\epsilon}, \, \cdot \, \Psi_{R,\epsilon} \rangle 
        \end{equation}
on the Weyl CCR algebra generated by $\exp(i \phi(f))$ and $\exp(i \pi(g))$, 
$f,g \in {\cal S}^{0}_{\mathbb{R}}(\mathbb{R})$, the
Schwartz space of real valued functions on the real line, whose Fourier transform 
vanishes at the origin. By compactness (Theorem 2.3.15 of \cite{BRo1}) there exists at least
one limit in the weak* topology:
        \begin{equation}
        \label{(44)}
               \omega_{\epsilon} =\lim_{R \to \infty} \omega_{R,\epsilon} \; . 
        \end{equation}    
The generator $H$ of time translations in the Fock (vacuum) representation is given by
        \begin{equation}
        \label{(45)}
               H = \frac{1}{2} \int {\rm d} x \; {:} [\pi^{2}(x) + 
               (\nabla \phi)^{2}(x)  
               + m^{2} \phi^{2}(x)] {:} \; . 
        \end{equation}
Above, the dots indicate Wick ordering and 
        \begin{equation}
        \label{(46)}
               m = \frac{e}{\sqrt{\pi}}
        \end{equation}
is the dynamically generated photon mass, whose origin is 
topological (see the preface of \cite{FJ} and \cite{SwiecaJ1}, p.~317).
Let $V(g)=\exp(i\pi(g))$. Using 
	\[
		:\phi(x)^{2}: = \lim_{x_{1},x_{2} \to x} [\phi(x_{1}) \phi(x_{2})
		- \langle \Omega, \phi(x_{1})\phi)(x_{2}) \Omega \rangle \mathbf{1}] 
	\]
and similarly for $:\nabla(\phi)^{2}(x):$, we obtain
	\[
		V(h)^{-1} H(x) V(h) = H(x) - 
		\nabla h(x) \cdot \nabla \phi(x) 
		+ \frac{1}{2} (\nabla h)^{2} (x)
		+ \frac{e^{2}}{\pi}\Bigl( h^{2}(x) -2h(x)\phi(x) \Bigr)
	\]
from which a.)~of the following proposition follows:
\begin{proposition}
\label{prop:4}
\quad 
\begin{itemize}
\item [$a.)$] 
	\[
		\langle \Psi_{R,\epsilon}, H \Psi_{R,\epsilon} \rangle = 
		\frac{\pi}{2}\int_{-\infty}^{+\infty} {\rm d} x \; 
		\Bigl(g_{R,\epsilon}^{'})^{2}(x)
		+\frac{e^{2}}{\pi}g_{R,\epsilon}^{2} (x) \Bigr) \; ; 
	\]
where the prime above denotes the derivative;
\item [$b.)$] Any state $\omega_{\epsilon}$, defined by \eqref{(44)}, has unit charge. 
\end{itemize}
\begin{proof} 
See Proposition~A.1 of Appendix A of \cite{BarW} for the 
definition of charge and the proof of item b.). 
\end{proof}
\end{proposition}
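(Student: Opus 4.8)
The plan is to prove both parts by exploiting that $\Psi_{R,\epsilon}$ is obtained from the Fock vacuum by a Weyl (shift) operator: since $T_{R,\epsilon} = \exp[i\sqrt{\pi}\,\pi(g_{R,\epsilon})] = V(\sqrt{\pi}\,g_{R,\epsilon})$, I would set $h \equiv \sqrt{\pi}\,g_{R,\epsilon}$ and work with $\Psi_{R,\epsilon} = V(h)\Omega$. The canonical commutation relation $[\phi(x),\pi(g)] = i\,g(x)$ gives, via Baker--Campbell--Hausdorff, the c-number shift $V(h)^{-1}\phi(x)V(h) = \phi(x) - h(x)$, with $\pi(x)$ left invariant. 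This is exactly the content of the displayed identity for $V(h)^{-1}H(x)V(h)$ preceding the proposition, once one tracks the Wick ordering: the normal-ordered monomials ${:}(\nabla\phi)^2{:}$ and ${:}\phi^2{:}$ acquire the linear and quadratic c-number corrections $-2\nabla h\cdot\nabla\phi + (\nabla h)^2$ and $-2h\phi + h^2$, while ${:}\pi^2{:}$ is untouched.

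For part a.) I would integrate that identity over $x$ and take the vacuum expectation value. Three facts then do all the work: $H\Omega = 0$, so $\langle\Omega, H\,\Omega\rangle = 0$; and $\langle\Omega,\phi(x)\,\Omega\rangle = \langle\Omega,\nabla\phi(x)\,\Omega\rangle = 0$, so every term linear in the quantum field drops out. Only the two c-number terms $\tfrac12(\nabla h)^2$ and $\tfrac12 m^2 h^2$ survive, and substituting $h = \sqrt{\pi}\,g_{R,\epsilon}$ together with $m^2 = e^2/\pi$ yields $\tfrac{\pi}{2}\!\int\![(g_{R,\epsilon}')^2 + \tfrac{e^2}{\pi}\,g_{R,\epsilon}^2]\,dx$, the asserted formula. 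The only genuine care needed is the well-definedness of $V(h)$ as a unitary on the \emph{massive} Fock space (assured since $m>0$ and $g_{R,\epsilon}$ is smooth of compact support, so $\pi(g_{R,\epsilon})$ is essentially self-adjoint) and the legitimacy of interchanging the spatial integration with the vacuum expectation, which holds because $g_{R,\epsilon}$ has compact support. I note in passing that the surviving $\tfrac{e^2}{\pi}\,g_{R,\epsilon}^2$ term integrates to order $R$, so the energy grows linearly in $R$ — this is the confining potential that makes the $R\to\infty$ limit the object of interest.

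For part b.) I would use the bosonization of the Schwinger-model current, $j^\mu(x) \propto \epsilon^{\mu\nu}\partial_\nu\phi(x)$, so that the charge density is, up to the conventional sign, $j^0(x) = \tfrac{1}{\sqrt{\pi}}\,\partial_x\phi(x)$. Applying the shift from the first paragraph gives $\omega_{R,\epsilon}(j^0(x)) = \pm\,\partial_x g_{R,\epsilon}(x)$, a distribution concentrated in two bumps: one of integral $+1$ on $[-\epsilon,0]$ near the origin and a compensating one of integral $-1$ near $x=R$. Integrating against a test function equal to $1$ on a fixed neighbourhood of the origin and vanishing before $x=R$ therefore registers charge $+1$, uniformly in $R$. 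Passing to the weak-$*$ limit $R\to\infty$ (which exists by the compactness invoked in \eqref{(44)}) sends the compensating charge off to spatial infinity, so the limit state $\omega_\epsilon$ carries a localized unit charge; the precise definition of the charge operator on the Weyl algebra and the verification that the limit is clean are supplied by Proposition~A.1 of \cite{BarW}.

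The hard part is part b.), specifically making ``unit charge'' precise and rigorous in the limit. Because a finite dipole has total charge zero, the unit charge is \emph{not} read off from the global charge integral at finite $R$; it emerges only because the negative partner is expelled to infinity as $R\to\infty$. One must therefore argue at the level of the limiting state (or of local charge operators on bounded regions) and check that the weak-$*$ limit does not silently redistribute charge back toward the origin. This is precisely the algebraic signature of confinement in the Schwinger model, and it is what the appeal to \cite{BarW} settles. By contrast, part a.) is, once the shift identity and the vanishing of the odd vacuum moments are in hand, an essentially mechanical computation.
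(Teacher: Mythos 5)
Your proposal is correct and follows essentially the same route as the paper: part a.) via the Weyl-shift identity $V(h)^{-1}\phi V(h)=\phi-h$ applied to the normal-ordered Hamiltonian density (the conjugation formula displayed just before the proposition, whose surviving c-number terms give exactly the stated integral, consistent with $\tfrac12 m^2h^2$ and $m^2=e^2/\pi$), and part b.) by reduction to Proposition~A.1 of \cite{BarW}. You merely supply the intermediate steps (vanishing of the linear-in-field vacuum expectations, $H\Omega=0$, and the bosonized charge density $\propto\partial_x g_{R,\epsilon}$) that the paper leaves implicit.
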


The above proposition suggests the following definition:

\begin{definition}
\label{def:2}
We say that $\omega_{\epsilon}$ has \emph{finite energy} if $\Psi_{R,\epsilon}$ lies in the quadratic form domain of $H$
for any $R < \infty$ and $\epsilon > 0$, and
         \begin{equation}
         \label{(47)}
             \limsup_{R \to \infty} \;  \langle \Psi_{R,\epsilon}, 
             H \Psi_{R,\epsilon}  \rangle < \infty \; . 
         \end{equation}
If $\omega_{\epsilon}$ does not have finite energy (with respect to the vacuum, 
which has by definition energy zero),
\emph{i.e.}, if
         \begin{equation}
         \label{(48)}
             \limsup_{R \to \infty} \; 
             \langle \Psi_{R,\epsilon}, H \Psi_{R,\epsilon} \rangle = \infty \; , 
         \end{equation}
we say that the associated fermions are \emph{confined}.
\end{definition}

By \cite{BarW}, Proposition A.2 of Appendix A, the representation 
defined by $\omega_{\epsilon}$ is inequivalent
to the Fock (vacuum) representation. This result follows from the 
fact that the state corresponds to a nonzero
eigenvalue, namely one, of the charge operator, but is not due 
to the existence of a macroscopic energy barrier
between $\omega_{\epsilon}$ and the vacuum state. Indeed, 
when charged particles exist, a system containing a
finite number of these charged particles should have finite 
energy relative to the vacuum, 
which is normalized to zero: this is the physical meaning 
of \eqref{(47)}; this condition is satisfied in
the Streater-Wilde model of soliton sectors, where the analogue 
of the l.h.s.~of~\eqref{(47)} is seen to be identical
to the energy of classical soliton solutions of the two-dimensional 
wave equation \cite{StrWi}. Note, however,
that in their model the scalar field is massless. In our case we have the 
following corollary of Proposition~\ref{prop:4}:

\begin{corollary}
\label{cor:5}
In the Schwinger model, Definition~\ref{def:2} implies that the 
fermions are confined.
\begin{proof} 
The first term in the r.h.s.~of a.)~of Proposition~\ref{prop:4} is 
uniformly bounded in $R$, the second is, by \eqref{(41)},
        \begin{equation}
        \label{(49)}
               E_{d} \equiv \int_{-\infty}^{\infty} {\rm d} x \; \langle \Psi_{R,\epsilon}, 
               :F_{01}(x)^{2}: \Psi_{R,\epsilon} \rangle \geq \mbox{ const.} \; R.
        \end{equation}     
\eqref{(49)} implies \eqref{(48)} and thus the fermions are confined.
\end{proof}
\end{corollary}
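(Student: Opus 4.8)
The plan is to reduce the confinement statement to the explicit energy identity of Proposition~\ref{prop:4}(a) and to show that its right-hand side diverges linearly in $R$. By Definition~\ref{def:2}, the fermions are confined precisely when $\limsup_{R\to\infty}\langle\Psi_{R,\epsilon},H\Psi_{R,\epsilon}\rangle=\infty$, \emph{i.e.}, when \eqref{(48)} holds. Writing the energy as
\[
\langle\Psi_{R,\epsilon},H\Psi_{R,\epsilon}\rangle=\frac{\pi}{2}\int_{-\infty}^{+\infty}{\rm d}x\,\bigl(g_{R,\epsilon}'\bigr)^{2}(x)+\frac{e^{2}}{2}\int_{-\infty}^{+\infty}{\rm d}x\,g_{R,\epsilon}^{2}(x),
\]
it suffices to show that the first (gradient) term stays bounded while the second (mass) term grows without bound as $R\to\infty$.

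First I would treat the gradient term. By \eqref{(41)} the function $g_{R,\epsilon}$ equals $1$ on $[0,R]$ and $0$ outside $[-\epsilon,R+\epsilon]$, so $g_{R,\epsilon}'$ is supported in the two transition windows $[-\epsilon,0]$ and $[R,R+\epsilon]$, each of fixed width $\epsilon$. Choosing the interpolating profile at each end to be a single fixed shape, merely translated by $R$ at the right endpoint, makes $\int(g_{R,\epsilon}')^{2}$ manifestly independent of $R$; in any event it is bounded by a constant depending only on $\epsilon$. Hence the first term is uniformly bounded as $R\to\infty$.

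Next I would bound the mass term from below. Since $g_{R,\epsilon}\equiv 1$ on $[0,R]$, we have $\int_{-\infty}^{+\infty}g_{R,\epsilon}^{2}\,{\rm d}x\ge\int_{0}^{R}{\rm d}x=R$, so the second term is at least $\tfrac{e^{2}}{2}R$ and diverges linearly. Combining the two estimates gives $\langle\Psi_{R,\epsilon},H\Psi_{R,\epsilon}\rangle\ge\mathrm{const}\cdot R\to\infty$, which is exactly \eqref{(48)}, so the fermions are confined. To close, I would record the physical interpretation: using $\phi=F_{01}$ from \eqref{(42)} and $m^{2}=e^{2}/\pi$ from \eqref{(46)}, the divergent mass term coincides with the integrated electric-field energy $\int{\rm d}x\,\langle\Psi_{R,\epsilon},{:}F_{01}(x)^{2}{:}\,\Psi_{R,\epsilon}\rangle$, so the linear growth is precisely the energy stored in the constant electric field of the charge dipole, the rising ``string'' energy familiar from the work of Casher, Kogut and Susskind.

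The only delicate point is the uniform bound on the gradient term, which relies on holding $\epsilon>0$ fixed and on keeping the transition profile of $g_{R,\epsilon}$ independent of $R$. Were the transition widths allowed to grow with $R$, the gradient term could in principle compete with the mass term; but with $\epsilon$ fixed (as arranged just after \eqref{(41)}) the linearly growing mass term dominates, and the argument is robust.
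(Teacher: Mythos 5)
Your proof is correct and takes essentially the same route as the paper: the gradient term in Proposition~\ref{prop:4}(a) is uniformly bounded in $R$ because $g_{R,\epsilon}'$ lives in two transition windows of fixed width $\epsilon$, while the mass term is bounded below by $\tfrac{e^{2}}{2}R$ since $g_{R,\epsilon}\equiv 1$ on $[0,R]$, yielding \eqref{(48)} and hence confinement. Your explicit caveat that the interpolating profile must be held fixed as $R\to\infty$ is a small but worthwhile precision that the paper leaves implicit.
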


Corollary~\ref{cor:5} provides the following physical interpretation 
for the mass term in the Schwinger model 
(see also \cite{CaKoSu}) as $R \to \infty$: it represents the energy 
of a dipole in the non-relativistic limit,
since, in one dimension, the Coulomb potential is linear. In general, 
we have the following distributional formula
(\cite{Guelfand}, p.~361):
       \begin{equation}
       \label{(50)}
              {\cal F}(|\vec{p}|^{\lambda}) 
              = \mbox{ const. } |\vec{x}|^{-\lambda-s} \; , 
       \end{equation}
where ${\cal F}$ denotes the distributional Fourier transform, 
and $s$ the space dimension. 

For $Re \lambda \le -s$ the function $|p|^{\lambda}$ is not locally integrable,
but it defines a distribution in the sense of Guelfand and Chilov (see
\cite{Guelfand}, p. 71).

If $\lambda = -2$ and $s=1$,
we have the present case. If $s=3$, we find $R^{-1}$ for large $R$ 
as expected. In the non-abelian case, we would have in
\eqref{(49)}, $:F_{\mu\nu}^{a}(x)F^{\mu\nu}_{a}(x):$ instead of $:F_{01}(x)^{2}:$,
where the summation over the color index a is understood. 
Due to the $A \wedge A$ term in the
gluon field tensor, we have in \eqref{(50)} the leading infrared singularity 
with $\lambda = -4$ as $|\vec{p}| \to 0$,
which implies the same linear behavior as \eqref{(49)}! Thus, if observable 
fields of type \eqref{(7)} may be constructed
for qcd, whereby $f$ and $g$ would have compact support around 
points growing linearly with a parameter $R$ along 
a radial direction, yielding a quark-antiquark pair, and if the present 
analogy is sound, one expects confinement.
Definition~\ref{def:2} would have, however, to allow for a ``path'' dependence 
of the l.h.s.~of \eqref{(48)} 
and require the validity of \eqref{(48)} independently of the ``path''.
   
It should be remarked that in qcd it is not the fermion number which is 
expected to be confined, but color, which is a multiplicative
charge. A very interesting soluble abelian model of triality (``charm'', an 
abelian version of the three color states
of a free quark) is found in Casher, Kogut and Susskind \cite{CaKoSu}: 
all states of nonzero ``charm'' are confined.

It was Casher, Kogut and Susskind \cite{CaKoSu} who first proposed 
that the deep inelastic structure functions of the theory
might have the scaling laws of the underlying Fermi fields in the Schwinger 
model, although only massive Bosons appear
as asymptotic states. This was proved by Swieca: the short-distance 
limit of the n-point functions of
the observables in the model are those of a free theory of charged 
massless fermions and massless photons (\cite{SwiecaJ1},
p.~317).

In qcd it was already suggested by Cornwall in 1982 \cite{Cornwall} that 
a gluon mass is dynamically generated. 
He used the Dirac or light-cone gauge, in which ghosts are absent. 
Further work in covariant gauges suggests that,
the dynamically generated gluon mass is primarily due to the simplest 
severe infrared singularity
$|\vec{p}|^{-4}$ mentioned above \cite{Gog}, see also \cite{BarGog}. If 
this were so, the analogy between qcd and the Schwinger 
model would be complete: the short-distance limit would imply the 
vanishing of the dynamically generated gluon mass, and, with it, 
the quarks would reappear as free particles, together with massless 
gluons. Note that due to \eqref{(46)}, 
there is no interaction in the limit when the dynamically generated mass 
of the photon tends to zero: there is ``asymptotic freedom''. 
The scaling dimension of the gluon fields is, however, expected to 
be anomalous, as previously observed.

The gluon mass also fits well in the general conjecture of the mass 
gap in general Yang-Mills theories, see the problem
posed by Jaffe and Witten in~\cite{JafWit}.

\section{Conclusion}

We have suggested a criterion which characterizes interacting theories 
in a proper Wightman framework, based on a ``singularity hypothesis''.
The associated framework relies on the axioms for the Wightman 
n-point functions for observable 
fields, including positivity, and thus requires the use of 
non-covariant gauges. The singularity hypothesis is either
incompatible with the ETCR (Corollary~\ref{cor:2}) or 
requires that the (nonperturbative) wave function renormalization
constant $Z$ equals zero, if the ETCR is assumed (Corollary~\ref{cor:3}), depending on 
which type of field - the ``bare''( conventional), or the physical
(renormalized) field in Lagrangian perturbation theory is identified
with the Wightman field in the theory proposed in \cite{WightPR}.

Since the ETCR is generally not valid for interacting field theories, 
this opens up the possibility
that the condition $Z=0$, assumed to be universal for interacting theories, 
is of specific nature. We propose that it describes ``dressed'' infraparticles (in the presence of 
massless particles) in certain specific theories, such as quantum electrodynamics in 
Buchholz's characterization \cite{Buch}, or of composite (unstable) particles.
 
Since the gluons are massless, ``dressed'' quarks and gluons 
may occur in the theory of strong interactions. 
A proposed caricature of (some aspects of) qcd such as quark 
confinement is the Schwinger model  
revisited in Section 5. There, the ``dressing'' of the electrons assumes 
a drastic form: by Bosonization,
the electron field is a functional of the photon field, which acquires 
a mass. We propose a criterion
of confinement which is valid in this model, and whose extrapolation 
to qcd, if valid, predicts
a surprisingly realistic picture. It is important that this conjectured 
extrapolation depends on
a description in terms of \emph{observables} (compare ref. \cite{Buch1}), 
whose role in the Schwinger model
was emphasized by Lowenstein and Swieca \cite{LSwi}. Accordingly, in qcd, we also 
expect the analogue(s) 
of the condition $Z=0$ due to the massless gluons; the Schwinger model is, 
however, canonical, as explicitly 
verified in \cite{WFW}. This is due to the fact that the ``dressing''of the electrons
by the photons is such that, apart from the $\theta$ angle characterizing the irreducible
representation, they entirely disappear from the picture, originating
a free theory of (massive) photons in Fock space, and thereby accounting for the model's
dynamical solubility.
An example of a canonical interacting field theory, for entirely different 
reasons, is given in 
\cite{JaeMu}. 
  
The proposed framework poses several difficult problems.  The construction
of the ``dressed'' observables remains open even in $qed_{1+3}$, in 
spite of the results in \cite{JaWre2}.
In qcd, the methods of \cite{JaWre2} are not directly applicable, due to 
the gluon self-interaction.

In \cite{SwiecaJ1}, there is a final remark: ``After almost a century of 
existence the main question about
quantum field theory seems still to be: what does it really describe? 
and not yet: does it provide a good
description of nature?''. The fact that all but the lightest particles are 
unstable and there is as yet no
rigorous model in quantum field theory to describe them (see the end 
of Section 4) is a clear instance of the fact
that quantum field theory has lost contact with its prime object 
of study, the elementary particles, and 
therefore with nature itself. As an important subarea of mathematical 
physics, it seems to have moved
in the direction contrary to Geoffrey Sewell's suggestion that ``in the 
words `mathematical physics', `physics' is the
noun and `mathematical' is the adjective''. If this tendency can 
be inverted, it may even be hoped that,
in spite of all the difficulties --- those mentioned above, concerning 
our approach, as well as others
(\cite{JafWit}, \cite{Mund}, \cite{Fredenhagen}) --- 
the question whether we should believe in quantum field theory 
posed by Wightman in the title of \cite{Wight} may be answered in the affirmative.

\section{Acknowledgement}
We are very grateful to the referee. He clarified and corrected several important conceptual issues. We learned 
a lot from his various objections. He convinced us that our chapter on the classical theory of unstable particles
was self-contradictory, and we decided to eliminate it.

\end{document}